\newcommand{\comment}[1]{}
\begin{document}

\theoremstyle{plain}
\newtheorem{theorem}{Theorem}
\newtheorem{lemma}[theorem]{Lemma}
\newtheorem{corollary}[theorem]{Corollary}
\newtheorem{conjecture}[theorem]{Conjecture}
\newtheorem{proposition}[theorem]{Proposition}

\theoremstyle{definition}
\newtheorem{definition}{Definition}

\theoremstyle{remark}
\newtheorem*{remark}{Remark}
\newtheorem{example}{Example}

\newcommand{\FF}{\mathbb{F}}
\newcommand{\ZZ}{\mathbb{Z}}
\newcommand{\RR}{\mathbb{R}}
\newcommand{\CPLX}{\mathbb{C}}

\newcommand{\AC}{\mathcal{A}}
\newcommand{\BC}{\mathcal{B}}
\newcommand{\CC}{\mathcal{C}}
\newcommand{\DC}{\mathcal{D}}
\newcommand{\EC}{\mathcal{E}}
\newcommand{\FC}{\mathcal{F}}
\newcommand{\GC}{\mathcal{G}}
\newcommand{\HC}{\mathcal{H}}
\newcommand{\IC}{\mathcal{I}}
\newcommand{\JC}{\mathcal{J}}
\newcommand{\LC}{\mathcal{L}}
\newcommand{\KC}{\mathcal{K}}
\newcommand{\MC}{\mathcal{M}}
\newcommand{\NC}{\mathcal{N}}
\newcommand{\OC}{\mathcal{O}}
\newcommand{\PC}{\mathcal{P}}
\newcommand{\QC}{\mathcal{Q}}
\newcommand{\RC}{\mathcal{R}}
\newcommand{\SC}{\mathcal{S}}
\newcommand{\TC}{\mathcal{T}}
\newcommand{\UC}{\mathcal{U}}
\newcommand{\VC}{\mathcal{V}}
\newcommand{\WC}{\mathcal{W}}
\newcommand{\XC}{\mathcal{X}}
\newcommand{\YC}{\mathcal{Y}}
\newcommand{\ZC}{\mathcal{Z}}
\newcommand{\rH}{\mathrm{H}}
\newcommand{\rT}{\mathrm{T}}

\newcommand{\ad}{^\dagger }			 			  
\newcommand{\ket}[1]{|#1\rangle}                  
\newcommand{\bra}[1]{\left\langle #1 \right|}     
\newcommand{\braket}[2]{|#1\rangle\langle#2|}    	  
\newcommand{\dyad}[2]{\ket{#1}\bra{#2}}           
\newcommand{\ip}[2]{\langle #1|#2\rangle}         
\newcommand{\matl}[3]{\langle #1|#2|#3\rangle}    
\newcommand{\vect}[1]{\boldsymbol{#1}}         

\newcommand{\nn}{\nonumber\\}

\newcommand{\ii}{\mathrm{i}}					  
\newcommand{\ee}{\mathrm{e}}  
\newcommand{\expo}[1]{\mathrm{e}^{#1}} 
\newcommand{\expoi}[1]{\expo{\ii #1}} 

\def\avg#1{\langle #1\rangle }
\def\cites#1{\color{red}XXXX #1 citation needed XXX \color{black} }
\def\mark#1{\color{red}XXXX --- #1 --- XXX \color{black} }
\def\dya#1{|#1\rangle \langle#1|}
\def\inpd#1#2{\langle#1|#2\rangle }
\def\mat#1{\left(\begin{matrix}#1\end{matrix}\right)}
\def\om{\omega }
\def\ot{\otimes}
\def\tr{{\rm Tr}}

\long\def\ca#1\cb{} 

\title{Construction of all general symmetric informationally complete measurements}  
\author{Amir Kalev}\email{amirk@unm.edu}
\affiliation{Center for Quantum Information and Control, MSC07--4220, University of New Mexico, Albuquerque, New Mexico 87131-0001, USA} 
\author{Gilad Gour}\email{gour@ucalgary.ca}
\affiliation{Institute for Quantum Information Science and Department of Mathematics and Statistics, University of Calgary, 2500 University Drive NW, Calgary, Alberta, Canada T2N 1N4} 
\affiliation{Department of Mathematics, University of California/San Diego, 
        La Jolla, California 92093-0112}

\date{\today}

\begin{abstract}
We construct the set of \emph{all} general (i.e. not necessarily rank 1) symmetric informationally complete (SIC) positive operator valued measures (POVMs). In particular, we show that any orthonormal basis of a real vector space of dimension $d^2-1$ corresponds to some general SIC POVM and vice versa. Our constructed set of all general SIC POVMs contains \emph{weak} SIC POVMs for which each POVM element can be made arbitrarily close to a multiple times the identity.  On the other hand, it remains open if for all finite dimensions our constructed family  contains a rank 1 SIC POVM.
 \end{abstract}

\pacs{03.67.-a, 03.65.Wj, 03.65.Ta}

\maketitle

\section{Introduction}

The development of coherent quantum technologies depends on the ability 
to evaluate how well one can prepare or create a particular quantum state~\cite{Fla05}. 
Such an evaluation can be carried out by making appropriate quantum measurements on a sequence of 
quantum systems that were prepared in exactly the same way. The goal in quantum tomography to find efficient  quantum measurements for which their statistics determine completely the states on which the measurement is carried out. Such quantum measurements are said to be informationally complete~\cite{Pru77,Bus89}. In the framework of quantum mechanics quantum measurements are represented by positive-operator-valued measures (POVMs)~\cite{Per93}. Informationally complete POVMs have been studied extensively in the last decade~\cite{Fla05,Pru77,Bus89,Per93,Cav02,Fuc04,Sco06,Ren04,Caves,Rene04,Sco09} due to their appeal both from a foundational perspective~\cite{Fuc04} and from a practical perspective for the purpose of quantum state tomography~\cite{Sco06} and quantum key distribution~\cite{Ren04}. 

A particularly attractive subset of informationally complete POVMs are symmetric-informationally-complete (SIC) 
POVMs~\cite{Caves,Rene04,Sco09,Zau99} for which the operator inner products of all pairs of POVM elements are the same.
Most of the literature on SIC POVMs focus on rank 1 SIC POVMs (i.e. all the POVM elements are proportional to rank 1 projectors).
Such rank 1 SIC POVMs have been shown analytically to exist in dimensions $d=1,...,16,19,24,28,35,48$, and numerically for all dimensions $d\leq 67$ (see~\cite{Sco09} and references therein). However, despite the enormous effort of the last years, it is still not known if rank 1 SIC POVMs exist in all finite dimensions.

The interest in rank 1 SIC POVMs stems from the fact that for quantum tomography rank 1 SIC POVMs are maximally efficient at determining the state of the system~\cite{Rene04}. 
On the other hand, rank 1 POVMs have a disadvantage as they completely erase the original state of the system being measured. Moreover, if the system being measured is a subsystem of a bigger composite system, then when a rank 1 POVM is applied to one part of the system it will destroy all the correlations (both classical and quantum) with the other parts. Therefore, there is a tradeoff between efficiency of a measurement (for the purposes of tomography) and non-disturbance (for the purpose of post processing). For this reason, a more `weak' version of SIC POVMs is very useful for tomography that is followed by other quantum information processing tasks. Properties of general SIC POVMs were also studied in~\cite{App07} and more recently in~\cite{Kalev13}. 

In this paper we construct the family of \emph{all} general SIC POVMs and therefore prove that general SIC POVMs exist in all finite dimensions. In particular, we show that the set of all general SIC POVMs is as big as the set of all orthonormal bases of a real vector space of dimension $d^2-1$. For every SIC POVM in the family we associate a parameter $a$ that determines how close the SIC POVM is to a rank 1 SIC POVM. On one extreme value of $a$, our construction shows that weak SIC POVMs with the property that all the operators in the POVM are close to (but not equal to) a constant factor times the identity always exist. On the other extreme, the question whether there exist rank 1 SIC POVMs in all finite dimensions depends on whether the family of general SIC POVMs contains a rank 1 POVM. 

This paper is organized as follows. In Sec.~\ref{sec:prop} we discuss general properties of SIC POVMs. In Sec.~\ref{sec:const} we introduce our construction of all SIC POVMs. Then in Sec.~\ref{sec:gm} we study SIC POVMs constructed from the generalized Gell-Mann matrices. In section Sec.~\ref{sec:rank1} we discuss the connection of our general SIC POVMs with rank 1 SIC POVMs. In Sec.~\ref{sec:d2} we give an example to illustrate our construction for rank 1 SIC POVM, and we end in Sec.~\ref{sec:conc} with discussion. 

\section{Properties of general SIC POVMs}\label{sec:prop}
We denote by $\rH_d$ the set of all $d\times d$ Hermitian matrices. $\rH_d$ can be viewed as
a $d^2$-dimensional vector space over the real numbers, equipped with the Hilbert-Schmidt inner product 
$( A,B)=\tr(AB)$ for $A,B\in\rH_d$.

\begin{definition}\label{def:sic}
A set of $d^2$ positive-semidefinite operators $\{P_{\alpha}\}_{\alpha=1,...,d^2}$ in $\rH_d$ is called a general SIC POVM if\\
\textbf{(1)} it is a POVM: $\sum_{\alpha=1}^{d^2}P_\alpha=I$, where $I$ is the $d\times d$ identity matrix, and\\
\textbf{(2)} it is symmetric:
$\tr(P_{\alpha}^2)=\tr(P_{\beta}^{2})\neq\frac{1}{d^3}$ for all $\alpha,\beta\in\{1,2,...,d^2\}$, and
$\tr(P_\alpha P_\beta)=\tr(P_{\alpha'}P_{\beta'})$ for all $\alpha\neq\beta$ and $\alpha'\neq\beta'$.\\
\end{definition}
\begin{remark}
In the definition above we assumed that  $\tr(P_{\alpha}^{2})\neq\frac{1}{d^3}$ 
since otherwise (see below) all $P_\alpha=\frac{1}{d^2}I$. Moreover, we will see that
conditions \textbf{(1)} and \textbf{(2)} are sufficient to ensure that the set $\{P_\alpha\}$ is a basis
for $\rH_d$ (for rank one SIC POVMs it was shown for example in~\cite{Caves}). 
Thus, SIC POVMs, as their name suggest, are informationally complete.
\end{remark}

Denoting by
$a\equiv \tr(P_{\alpha}^2)$ and $b\equiv\tr(P_\alpha P_\beta)$ for $\alpha\neq\beta$, we have 
$$
d=\tr(I^2)=\sum_{\alpha=1}^{d^2}\sum_{\beta=1}^{d^2}\tr(P_\alpha P_\beta)=d^2a+d^2(d^2-1)b\;.
$$
Thus, the parameters $a$ and $b$ are not independent and $b$ is given by
\begin{equation}\label{b}
b=\frac{1-da}{d(d^2-1)}\;.
\end{equation}
Also the traces of all the elements in a SIC POVM are equal:
$$
\tr(P_\alpha)=\tr(P_\alpha I)=\sum_{\beta=1}^{d^2}\tr(P_\alpha P_\beta)=a+(d^2-1)b=\frac{1}{d}\;.
$$
Thus, $a$ is the only parameter defining the `type' of a general SIC POVM. The range of $a$ is given by
\begin{equation}\label{a}
\frac{1}{d^3}<a\leq\frac{1}{d^2}\;,
\end{equation}
where the strict inequality above follows from the Cauchy-Schwarz inequality
$\tr(P_\alpha)=\tr(P_\alpha I)<\sqrt{da}$. This is a strict inequality since otherwise all $P_\alpha=\frac{1}{d^2}I$.
Note also that $a=1/d^2$ if and only if all $P_\alpha$ are rank one. 

The set $\{P_\alpha\}_{\alpha=1,...,d^2}$ form a basis for $\rH_d$. To see that all $P_\alpha$ are linearly independent
we follow the same lines as in~\cite{Caves}. Suppose there is a set of $d^2$ real numbers $\{r_\alpha\}_{\alpha=1,...,d^2}$   satisfying
$$
\sum_{\alpha=1}^{d^2}r_\alpha P_\alpha =0\;.
$$
Then by taking the trace on both sides we get $\sum_{\alpha=1}^{d^2}r_\alpha=0$. Moreover, multiplying the equation above by $P_\beta$ and taking the trace gives
\begin{align*}
0&=\sum_{\alpha=1}^{d^2}r_\alpha \tr(P_\alpha P_\beta)=\sum_{\alpha=1}^{d^2}r_\alpha\left(a\delta_{\alpha,\beta}+b(1-\delta_{\alpha,\beta})\right)\\
&=(a-b)\sum_{\alpha=1}^{d^2}r_\alpha\delta_{\alpha,\beta}+b\sum_{\alpha=1}^{d^2}r_\alpha=(a-b)r_\beta\;,
\end{align*}
where we have used $\sum_{\alpha=1}^{d^2}r_\alpha=0$. Now, from Eq.~(\ref{b}) it follows that $a=b$ only if $a=1/d^3$.
Since it is not in the domain of $a$ (see Eq.~(\ref{a})), we conclude that $a\neq b$ and therefore all $r_\beta=0$. Thus, general SIC POVMs are informationally complete POVMs.

The dual basis $\{Q_\alpha\}_{\alpha=1,2,...,d^2}$ of an informationally complete POVM $\{P_\alpha\}_{\alpha=1,2,...,d^2}$,
is a basis of $\rH_d$ satisfying
\begin{equation}\label{dual}
\tr(P_\alpha Q_\beta)=\delta_{\alpha,\beta}\;\;\forall\;\alpha,\beta\in\{1,2,...,d^2\}.
\end{equation}
It is simple to check that any density matrix $\rho\in\rH_d$ can be expressed as
\begin{equation}\label{rho}
\rho=\sum_{\alpha=1}^{d^2}p_\alpha Q_\alpha\;,
\end{equation}
where $p_\alpha\equiv\tr\left(P_\alpha\rho\right)$ are the probabilities associated with the informationally complete 
measurement $\{P_\alpha\}$. Thus, the existence of a dual basis to an informationally complete POVM shows that a $d\times d$ density matrix $\rho$ can be viewed as a $d^2$-dimensional probability vector $\vec{p}=(p_1,...,p_{d^2})$. 

In~\cite{Spe08,Fer08} it was shown that the dual basis to a basis of positive-semidefinite matrices can not itself consist of only positive-semidefinite matrices. 
Therefore, the requirement that $\rho$ in~(\ref{rho}) is positive-semidefinite imposes a constraint on the  probability vectors $\vec{p}$ that correspond to a density matrix $\rho$. For example, the vector
$\vec{p}=(1,0,...,0)$ corresponds to $\rho=Q_1$. Thus, if $Q_1$ is not positive-semidefinite then the vector
$\vec{p}=(1,0,...,0)$ does not correspond to a density matrix.
The set of all probability vectors $\vec{p}=(p_1,...,p_{d^2})$ that correspond to density matrices  form a simplex (in the convex set of $d^2$-dimensional probability vectors) that depends on the choice of the POVM basis $\{P_\alpha\}$ and, in particular, its dual.

The calculation of the dual basis for an informationally complete POVM involves in general cumbersome expressions.
However, for a general SIC POVM $\{P_\alpha\}_{\alpha=1,2,...,d^2}$, with a parameter $a$ as above, the dual basis is given by 
$$
Q_\alpha=\frac{d}{ad^3-1}\left[(d^2-1)P_\alpha-(1-da)I\right]\;.
$$
Using the symmetric properties of $P_\alpha$, it is straightforward to check that the $Q_\alpha$s above indeed satisfy Eq.~(\ref{dual}). Note that for a rank one SIC POVM $Q_\alpha=d(d+1)P_\alpha-I$.

\section{Construction of general SIC POVMs}\label{sec:const}
Let $\rT_d\subset\rH_d$ be the $(d^2-1)$-dimensional subspace of $\rH_d$ consisting of all $d\times d$ traceless Hermitian matrices. To construct general SIC POVMs we need the following two Lemmas:
\begin{lemma}\label{le:symOp}
Let $\{R_\alpha\}_{\alpha=1,2,...,d^2-1}$ be a basis of $\rT_d$ such that
\begin{equation}\label{ggg}
\tr(R_\alpha R_\beta)=r\delta_{\alpha,\beta}-\frac{r}{d^2-1}(1-\delta_{\alpha,\beta})\;,
\end{equation}
for some $0<r\in\mathbb{R}$. Then, for any real $t\neq 0$, the set $\{P_\alpha\}_{\alpha=1,2,...,d^2}$
defined by
\begin{align}\label{symOp}
& P_{\alpha}\equiv\frac{1}{d^2}I+t R_\alpha,\;\;{\rm for}\;\;\alpha=1,2,...,d^2-1,\nn
& P_{d^2}\equiv\frac{1}{d^2}I-t\sum_{\alpha=1}^{d^2-1} R_\alpha,
\end{align}
is a symmetric basis of $\rH_d$ according to Definition~\ref{def:sic}.
\end{lemma}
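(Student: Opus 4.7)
The plan is a straightforward verification of Definition~\ref{def:sic} for the proposed $\{P_\alpha\}$, exploiting the symmetric structure of the inner products in~(\ref{ggg}) and the fact that $\tr(R_\alpha)=0$ for all $\alpha$.

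First I would verify the POVM condition (1). Adding all $P_\alpha$'s, the coefficient of $I$ is $(d^2-1)/d^2+1/d^2=1$, and the terms $+tR_\alpha$ for $\alpha\le d^2-1$ cancel exactly the $-t\sum_\alpha R_\alpha$ in $P_{d^2}$. This is essentially the reason for the particular form chosen for $P_{d^2}$.

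Next, the symmetry condition (2). I would compute $\tr(P_\alpha^2)$ and $\tr(P_\alpha P_\beta)$ case by case, using the tracelessness of each $R_\alpha$ to kill all cross terms involving one factor of $I$ and one factor of an $R$. For $\alpha,\beta\le d^2-1$ this gives immediately
\begin{equation*}
\tr(P_\alpha^2)=\tfrac{1}{d^2}+t^2 r,\qquad \tr(P_\alpha P_\beta)=\tfrac{1}{d^2}-\tfrac{t^2 r}{d^2-1}\;(\alpha\neq\beta),
\end{equation*}
by direct substitution of~(\ref{ggg}). The work is then to check that the element $P_{d^2}$ has the same self-inner-product and the same cross-inner-product with every other $P_\alpha$. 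For $\tr(P_{d^2}^2)$, the relevant quantity is $\sum_{\alpha,\beta=1}^{d^2-1}\tr(R_\alpha R_\beta)=(d^2-1)r-(d^2-1)(d^2-2)\cdot\tfrac{r}{d^2-1}=r$, so again $\tr(P_{d^2}^2)=\tfrac{1}{d^2}+t^2 r$. For $\tr(P_\alpha P_{d^2})$ with $\alpha\le d^2-1$, the relevant quantity is $\sum_{\gamma}\tr(R_\alpha R_\gamma)=r-(d^2-2)\cdot\tfrac{r}{d^2-1}=\tfrac{r}{d^2-1}$, which gives $\tr(P_\alpha P_{d^2})=\tfrac{1}{d^2}-\tfrac{t^2 r}{d^2-1}$, matching the earlier value. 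This is the one piece where the specific coefficient $-r/(d^2-1)$ in~(\ref{ggg}) is crucial: it is precisely tuned so that $P_{d^2}$ is indistinguishable from the other $P_\alpha$ with respect to the SIC inner-product relations. This I expect to be the most delicate bookkeeping step, though still routine.

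Finally, the condition $\tr(P_\alpha^2)\neq 1/d^3$ is immediate, since $r>0$ and $t\neq 0$ give $\tr(P_\alpha^2)=1/d^2+t^2r>1/d^2>1/d^3$. Having established (1) and (2), the basis property follows from the linear-independence argument already given in Sec.~\ref{sec:prop}: conditions (1) and (2), together with $a\neq 1/d^3$, force $a\neq b$ and hence linear independence of the $d^2$ operators $P_\alpha$ in the $d^2$-dimensional space $\rH_d$. No further work is needed.
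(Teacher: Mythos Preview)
Your approach is essentially the paper's: compute the pairwise traces case by case (for $\alpha,\beta<d^2$, for $\alpha<d^2$ paired with $d^2$, and for $\alpha=\beta=d^2$), then establish linear independence. One arithmetic slip recurs throughout: $\tr\bigl((\tfrac{1}{d^2}I)^2\bigr)=\tfrac{d}{d^4}=\tfrac{1}{d^3}$, not $\tfrac{1}{d^2}$, so every inner product should read $\tfrac{1}{d^3}+\cdots$ rather than $\tfrac{1}{d^2}+\cdots$; your conclusions survive, since $\tfrac{1}{d^3}+t^2r>\tfrac{1}{d^3}$ still gives $a\neq\tfrac{1}{d^3}$. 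The only real difference from the paper is in the linear-independence step: you invoke the general argument from Sec.~\ref{sec:prop}, while the paper argues directly from the given basis $\{R_\alpha\}$ --- if $\sum_{\alpha}s_\alpha P_\alpha=0$, tracing gives $\sum_\alpha s_\alpha=0$, and then the traceless part reads $t\sum_{\alpha<d^2}(s_\alpha-s_{d^2})R_\alpha=0$, forcing $s_\alpha=s_{d^2}$ for all $\alpha$ and hence $s_\alpha\equiv 0$. Both routes are valid and equally short.
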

The proof  is given in App.~\ref{app:proof_le:symOp}. We now construct a basis $\{R_\alpha\}_{\alpha=1,2,...,d^2-1}$ of $\rT_d$ that satisfies the conditions in Lemma~\ref{le:symOp}.
Let $\{F_\alpha\}_{\alpha=1,2,...,d^2-1}$ be an orthonormal basis of $\rT_d$, $\tr(F_\alpha)=0$ and $\tr(F_\alpha F_\beta)=\delta_{\alpha,\beta}$, and let 
$$
R_\alpha\equiv xF_\alpha+y\sum_{\beta\neq\alpha} F_\beta
$$
where $x$ and $y$ are some non-zero real constants.
We then get
\begin{equation}\label{tec}
\tr(R_\alpha R_\beta)=x^2\delta_{\alpha,\beta}+2xy(1-\delta_{\alpha,\beta})+y^2\left((d^2-3)+\delta_{\alpha,\beta}\right)
\end{equation}
We would like to find $x$ and $y$ such that $\tr(R_\alpha R_\beta)$ above has the form Eq.~(\ref{ggg}).
We therefore looking for $x$ and $y$ such that for any $\alpha\neq\beta\in\{1,2,...,d^2-1\}$
\begin{equation}\label{o1}
\frac{\tr(R_{\alpha}^{2})}{\tr(R_\alpha R_\beta)}=-(d^2-1)\;.
\end{equation}
From Eq.~(\ref{tec}) we get
\begin{equation}\label{o2}
\frac{\tr(R_{\alpha}^{2})}{\tr(R_\alpha R_\beta)}=\frac{x^2+y^2(d^2-2)}{2xy+y^2(d^2-3)}=\frac{\omega^2+d^2-2}{2\omega+d^2-3}\;,
\end{equation}
where $\omega\equiv x/y$. Comparing Eq.~(\ref{o1}) and Eq.~(\ref{o2}) gives two solutions for $\omega$:
$$
\omega_{\pm}=1-d^2\pm d.
$$
Without loss of generality, we take $y=1$ so that $x=\omega$. Thus, for any real $t\neq 0$, we define:
\begin{align*}
& P_{\alpha,\pm}=\frac{1}{d^2}I+t \left(\omega_\pm F_\alpha+\sum_{\beta\neq\alpha} F_\beta\right)\;\;{\rm for}\;\;\alpha=1,2,...,d^2-1\\
& P_{d^2,\pm}=\frac{1}{d^2}I-t\sum_{\alpha=1}^{d^2-1}\left(\omega_\pm F_\alpha+\sum_{\beta\neq\alpha} F_\beta\right).
\end{align*}
Denoting by $F=\sum_{\alpha=1}^{d^2-1} F_\alpha$, we get
\begin{align}\label{sicpm}
& P_{\alpha,\pm}=\frac{1}{d^2}I +t\Bigl( F-d(d\mp 1) F_\alpha\Bigr)\;\;{\rm for}\;\;\alpha=1,2,...,d^2-1\nn
& P_{d^2,\pm}=\frac{1}{d^2}I+t(1\mp d)F.
\end{align} 
\begin{lemma}\label{le:symOp2}
Let $\{P_{\alpha}\}_{\alpha=1,2,...,d^2}$ be a symmetric basis of $\rH_d$ according to Definition~\ref{def:sic}.  Then, for any real $t\neq 0$, the two sets $\{F_{\alpha,\pm}\}_{\alpha=1,2,...,d^2-1}$
\begin{align*}
F_{\alpha,\pm}\equiv &\pm \frac1{t}\frac1{d(d\pm1 )^2}\Bigl(\frac1{d}I-P_{d^2}+ (1\mp d)P_\alpha  \Bigr)\nn{\rm for}\;\;\alpha=&1,2,...,d^2-1
\end{align*}
are orthonormal bases of $\rT_d$.
\end{lemma}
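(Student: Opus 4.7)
The plan is to verify, for each sign branch, three properties of $\{F_{\alpha,\pm}\}$: (a) each element is traceless, hence lies in $\rT_d$; (b) the set is orthonormal with respect to the Hilbert--Schmidt inner product; (c) there are $d^2-1$ elements, so by (b) and the dimension count they automatically form a basis of $\rT_d$. Item (c) is immediate. The content is in (a) and (b), both of which reduce to direct manipulation of the two SIC moments $\tr(P_\gamma)=1/d$ and $\tr(P_\gamma P_\delta)=a\delta_{\gamma\delta}+b(1-\delta_{\gamma\delta})$ established in Sec.~\ref{sec:prop}.

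For (a) I would plug the definition into $\tr(F_{\alpha,\pm})$; since $\tr(P_\beta)=1/d$ for every $\beta$, the trace of the bracketed combination $\tfrac1d I - P_{d^2} + (1\mp d)P_\alpha$ collapses to a simple numerical expression in $d$ that cancels. For (b) I would expand the product of the two brackets into nine terms, evaluate each via the two moments above, and eliminate $b$ using Eq.~(\ref{b}). In the off-diagonal case $\alpha\ne\beta$, the resulting expression collapses to zero precisely because the two possible coefficients $1\mp d$ of $P_\alpha$ are the two roots of a quadratic in $(1 + \text{coefficient})/d$---the same quadratic, up to a linear change of variables, that produced $\omega_\pm=1-d^2\pm d$ in Eqs.~(\ref{o1})--(\ref{o2}). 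This is the structural reason two branches $\pm$ appear in the lemma. The diagonal case $\alpha=\beta$ differs from the off-diagonal one only by an additional $(1\mp d)^2(a-b)$; substituting the relation $a=1/d^3+t^2 r$ inherited from the Lemma~\ref{le:symOp} construction, with $r$ dictated by the corresponding $\omega_\pm$ branch, the prefactor $\pm\tfrac{1}{td(d\pm1)^2}$ is exactly the one that normalizes this trace to unity.

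The main obstacle is purely the sign bookkeeping: keeping $\pm$ and $\mp$ consistent through tracelessness, orthogonality and normalization, and correctly using $a-b=(ad^3-1)/(d(d^2-1))$ (implied by Eq.~(\ref{b})) to translate norms into the parameter $t$. Conceptually, the lemma inverts Lemma~\ref{le:symOp}: every general SIC POVM produces canonically an orthonormal basis of $\rT_d$, with the two-fold $\pm$ ambiguity reflecting the two roots of the same underlying quadratic that gave $\omega_\pm$. Combined with Lemma~\ref{le:symOp}, this is the surjectivity step that establishes the bijection between general SIC POVMs and orthonormal bases of $\rT_d$ advertised as the paper's main claim.
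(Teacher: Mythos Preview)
Your approach---direct verification of tracelessness and of the Hilbert--Schmidt inner products using the SIC moments $\tr(P_\gamma)=1/d$ and $\tr(P_\gamma P_\delta)=a\delta_{\gamma\delta}+b(1-\delta_{\gamma\delta})$ from Sec.~\ref{sec:prop}---is exactly the paper's approach; the paper's proof is literally the single sentence asserting that this check is ``straightforward.'' Your plan simply spells out that computation.
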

\begin{proof}
Assuming that the $\{P_{\alpha}\}_{\alpha=1,2,...,d^2}$ are symmetric basis of $\rH_d$, it is straight forward to check that the $F_{\alpha,\pm}$ form an orthonormal operator basis of $\rT_d$, $\tr(F_\alpha)=0$ and $\tr(F_\alpha F_\beta)=\delta_{\alpha,\beta}$.
\end{proof}
Most importantly, the two orthonormal bases $\{F_{\alpha,\pm}\}$ are related to each other by an orthogonal transformation. Given a symmetric basis $\{P_{\alpha}\}_{\alpha=1,2,...,d^2}$ in $\rH_d$, we can write it in two ways
\begin{align*}
P_{\alpha}&=\frac{1}{d^2}I +t\Bigl( F_{+}-d(d-1) F_{\alpha,+}\Bigr)\nn&=\frac{1}{d^2}I +t\Bigl( F_{-}-d(d+1) F_{\alpha,-}\Bigr)\;\;{\rm for}\;\;\alpha=1,2,...,d^2-1\nn
P_{d^2}&=\frac{1}{d^2}I+t(1-d)F_{+}=\frac{1}{d^2}I+t(1+ d)F_{-}.
\end{align*} 
Therefore, without loss of generality, we can express all symmetric operators on $\rH_d$ in a given structure, say $\{P_{\alpha,-}\}$ of Eq.~(\ref{sicpm}).

We are now ready to introduce the construction of all general SIC POVMs. Let $\{F_\alpha\}$ (with ${\alpha=1,...,d^2-1}$) be an orthonormal basis of $\rT_d$; that is, $\tr(F_\alpha)=0$ and $\tr(F_\alpha F_\beta)=\delta_{\alpha,\beta}$ for all $\alpha,\beta\in\{1,2,...,d^2-1\}$. For example, for $d=2$ the normalized three Pauli matrices form such a basis for $\rT_2$. For higher dimensions one can take for example the generalized Gell-Mann matrices. Given such a fixed basis for $\rT_d$, we define two real numbers $t_0$ and $t_1$ as follows. 
Let $F\equiv\sum_{\alpha=1}^{d^2-1}F_\alpha$ and for $\alpha=1,...,d^2-1$ let $\lambda_\alpha$ and $\mu_\alpha$ be the maximum and minimum eigenvalues of $R_\alpha\equiv F- d(d+1) F_\alpha$, respectively. Denote also by $\lambda_{d^2}$ and $\mu_{d^2}$ the maximum and minimum eigenvalues of $R_{d^2}\equiv(d+1)F$, respectively. Note that for all $\alpha$, $\lambda_\alpha$  are positive and $\mu_\alpha$ are negative since $\tr(F)=\tr(F_\alpha)=0$. We denote
$$
t_0\equiv-\frac{1}{d^2}\min_{\alpha\in\{1,...,d^2\}} \left\{\frac{1}{\lambda_\alpha}\right\}\;,\;
t_1\equiv-\frac{1}{d^2}\max_{\alpha\in\{1,...,d^2\}} \left\{\frac{1}{\mu_\alpha}\right\}.
$$  

\begin{theorem}\label{thm:main}
For any non zero $t\in[t_0,t_1]$ the set of $d^2$ operators 
\begin{align}
& P_{\alpha}\equiv\frac{1}{d^2}I+t\left(F- d(d+1) F_\alpha\right)\;\;\forall\alpha\in\{1,2,...,d^2-1\}\nonumber\\
& P_{d^2}\equiv\frac{1}{d^2}I+t(d+1)F \;,\label{main}
\end{align} 
form a general SIC POVM in $\rH_d$. Moreover, any general SIC POVM can be obtained in this way.
\end{theorem}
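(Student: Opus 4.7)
The argument has two halves matching the two claims of the theorem, and each half reduces to one of Lemmas~\ref{le:symOp} and~\ref{le:symOp2} together with a short positivity computation, so no genuinely new construction is required beyond what has already been set up above.

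\emph{Forward direction.} First I would identify the operators $R_\alpha \equiv F - d(d+1) F_\alpha$ (for $\alpha = 1,\ldots,d^2-1$) with the $\omega = \omega_- = 1-d^2-d$ specialization of the one-parameter family considered around Eqs.~(\ref{tec})--(\ref{o2}). A direct substitution then shows they satisfy hypothesis (\ref{ggg}) of Lemma~\ref{le:symOp} with some $r>0$, and that lemma delivers at once that the $\{P_\alpha\}$ defined by (\ref{main}) form a Hermitian basis satisfying both conditions of Definition~\ref{def:sic}. What is left is positive semidefiniteness of each $P_\alpha = \tfrac{1}{d^2}I + tR_\alpha$. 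Since every $R_\alpha$ (including $R_{d^2} \equiv (d+1)F$, which controls $P_{d^2}$) is traceless Hermitian, its smallest eigenvalue $\mu_\alpha$ is negative and its largest $\lambda_\alpha$ positive; the spectral condition $\tfrac{1}{d^2}I + tR_\alpha \succeq 0$ then splits by the sign of $t$ into $t \leq -1/(d^2\mu_\alpha)$ and $t \geq -1/(d^2\lambda_\alpha)$, and intersecting over $\alpha \in \{1,\ldots,d^2\}$ yields precisely the interval $[t_0,t_1]\setminus\{0\}$ quoted in the statement.

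\emph{Converse.} Starting from an arbitrary general SIC POVM $\{P_\alpha\}$, pick $t$ to be the unique non-zero real satisfying $P_{d^2} - \tfrac{1}{d^2}I = t(1+d)F_-$ with $F_- \equiv \sum_\alpha F_{\alpha,-}$, and invoke Lemma~\ref{le:symOp2}. The lemma produces an orthonormal basis $\{F_{\alpha,-}\}$ of $\rT_d$, and the display immediately following Lemma~\ref{le:symOp2} rewrites $\{P_\alpha\}$ in exactly the form (\ref{main}) relative to this basis and this $t$. Positivity of the $P_\alpha$ then forces $t \in [t_0,t_1]$ by running the forward-direction positivity check in reverse, so $\{P_\alpha\}$ lies in the constructed family.

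\emph{Where the work is.} The only nontrivial book-keeping is the positivity step: one has to treat the special index $\alpha = d^2$, whose coefficient in (\ref{main}) looks superficially different from the other $\alpha$'s, on the same footing as the rest, and to combine the two signs of $t$ into a single interval $[t_0,t_1]$ without off-by-one errors between the maxima over $1/\lambda_\alpha$ and the minima over $1/\mu_\alpha$. Once this eigenvalue accounting is handled uniformly over all $\alpha$, both directions follow mechanically from Lemmas~\ref{le:symOp} and~\ref{le:symOp2}.
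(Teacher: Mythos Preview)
Your plan is essentially the paper's own argument, just with the symmetry computation delegated to Lemma~\ref{le:symOp} rather than redone inline; the paper simply computes $\tr(P_\alpha P_\beta)$ directly from the orthonormality of the $F_\alpha$ and then, for the converse, inverts~(\ref{main}) to obtain
\[
F_\alpha=\frac{1}{td(d+1)^2}\Bigl(\tfrac{1}{d}I+P_{d^2}-(d+1)P_\alpha\Bigr)
\]
and checks orthonormality, which is exactly the content of Lemma~\ref{le:symOp2} specialized to the ``$-$'' sign. One small wrinkle: your instruction ``pick $t$ to be the unique non-zero real satisfying $P_{d^2}-\tfrac{1}{d^2}I=t(1+d)F_-$'' is circular, since $F_{\alpha,-}$ in Lemma~\ref{le:symOp2} already depends on $t$ (so the two sides scale identically and the equation holds for \emph{every} $t\neq 0$); the right phrasing is to fix any $t\neq 0$, let Lemma~\ref{le:symOp2} produce the basis $\{F_{\alpha,-}\}$, read off the representation~(\ref{main}) from the display following that lemma, and then observe that positive semidefiniteness of the given $P_\alpha$ forces this $t$ into $[t_0,t_1]$ computed relative to that basis.
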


\begin{proof}
By construction we have $\sum_{\alpha=1}^{d^2}P_\alpha=I$. 
Moreover, the real parameters $t_0$ and $t_1$ have been chosen in such a way to ensure that all $\{P_\alpha\}_{\alpha=1,...,d^2}$ are positive-semidefinite.  Thus, $\{P_\alpha\}$ is a POVM.
Since $\tr(F_\alpha)=\tr(F)=0$ and $\tr(F_\alpha F)=1$ we get for $\alpha,\beta\in\{1,2,...,d^2-1\}$
\begin{align*}
\tr(P_{\alpha}P_{\beta}) & =\frac{1}{d^3}+t^2\tr\left[\left(F-d(d+1)F_\alpha\right)\left(F-d(d+1)F_\beta\right)\right]\\
& =\frac{1}{d^3}+t^2\left[d^2-1-2d(d+1)+d^2(d+1)^2\delta_{\alpha,\beta}\right]\\
&=\frac{1}{d^3}+t^2(d+1)^2\left(d^2\delta_{\alpha,\beta}-1\right)\;.
\end{align*}
Thus, for all $\alpha,\beta\in\{1,2,...,d^2-1\}$ with $\alpha\neq\beta$ 
$\tr(P_{\alpha}P_\beta)=\tr(P_{\alpha}P_{d^2})=1/d^3-t^2(d+1)^2$. Similarly, $\tr(P_{\alpha}^{2})=\tr(P_{d^2})$
for all $\alpha=1,2,...,d^2-1$.

It is therefore left to show that any general SIC POVM can be obtained in this way.
Indeed, isolating $F_\alpha$ from Eq.~(\ref{main}) gives
\begin{equation}\label{f}
F_\alpha=\frac{1}{td(d+1)^2}\left(\frac{1}{d}I+P_{d^2}-(d+1)P_\alpha\right)\;.
\end{equation}
Thus, if $\{P_\alpha\}_{\alpha=1,...,d^2}$ is a general SIC POVM then one can easily verify that the set $\{F_\alpha\}_{\alpha=1,...,d^2-1}$ defined in~(\ref{f}) is orthonormal.
This completes the proof.
\end{proof}

The parameter $a$ associated with our general SIC POVMs is given by
\begin{equation}\label{at}
a=a(t)\equiv\frac{1}{d^3}+t^2(d-1)(d+1)^3
\end{equation}
The maximum value of $a$ is obtained when $t=t_m\equiv\max\{|t_{0}|,t_{1}\}$ and is depending on the choice of the orthonormal basis $\{F_\alpha\}$. Thus, for a given basis $\{F_\alpha\}$, the construction above generates general SIC POVMs for any 
$\frac{1}{d^3}<a\leq a(t_m)$. 
The value of $a(t_m)$ is always bounded from above by $1/d^2$, and it is equal to $1/d^2$ only if the resulting SIC POVM consists of rank 1 operators. We therefore denote 
\begin{equation}\label{amax}
a_{\max}\equiv\max_{\{F_\alpha\}\in\rT_d}a(t_m)\;,
\end{equation}
where the maximum is taken over all orthonormal bases $\{F_\alpha\}$ of $\rT_d$. We say here that  $\{F_\alpha\}$ is an \emph{optimal} basis of $\rT_d$ if its $a(t_m)$ value equals $a_{\max}$. The conjecture that rank 1 SIC POVMs exist in all finite dimensions is equivalent to the conjecture that $a_{\max}$ always equal to $1/d^2$.

\section{SIC POVM from generalized Gell-Mann operator basis}\label{sec:gm}
We now calculate the maximal value of the parameter, $a$, for the case where the operator basis, $\{F_{\alpha}\}$, is the generalized Gell-Mann basis.  The generalized Gell-Mann operators are a set of $d^2-1$ operators which form a basis for $\rT_d$. We can label them by two indexes $n,m$ each taking on integer values, $n,m=1,2,\ldots,d$, such that
\begin{align}\label{gm1}
G_{nm}=\begin{cases}
  \frac1{\sqrt2}(\ket{n}\bra{m}+\ket{m}\bra{n})& \text{for } n<m, \\
   \frac\ii{\sqrt2}(\ket{n}\bra{m}-\ket{m}\bra{n})& \text{for } m<n
    \end{cases}
\end{align}
and
\begin{equation}\label{gm2}
G_{nn}=  \frac1{\sqrt{n(n+1)}}(\sum_{k=1}^n\ket{k}\bra{k}-n\ket{n+1}\bra{n+1}), 
\end{equation}
for $n=1,2,\ldots,d-1$. The pre-factors in Eqs.~(\ref{gm1}) and~(\ref{gm2}) were chosen such that  $\tr(G_{nm}^2)=1$ for all $n,m=1,2,\ldots,d-1$.

We first bound the eigenvalues of $F=\sum_\alpha F_\alpha=\sum_{n,m}G_{nm}$ using the Weyl's inequality~\cite{franklin93}. The inequality states that the eigenvalues of $F=N+V$, where $N$ and $V$ are  $d\times d$ hermitian matrices,  are bounded as 
\begin{equation}\label{Wineq}
n_i+v_d\leq f_i \leq n_i + v_1.
\end{equation}
where $f_1\geq\cdots\geq f_d$,   $n_1\geq\cdots\geq n_d$, and $v_1\geq\cdots\geq v_d$, are the eigenvalues of $F$, $N$ and $V$, respectively. In our case $F$ could be represented as the sum of two $d\times d$ hermitian matrices,
\begin{equation*}
F=\sum_{n=1}^{d-1}G_{nn}+\frac1{\sqrt{2}}\begin{pmatrix}
  0 & 1-\ii &1-\ii& \cdots &1-\ii \\
  1+\ii & 0 & \cdots & \cdots &1-\ii \\
  1+\ii & \vdots & \ddots &\ddots &\vdots \\
  \vdots & \vdots & \ddots &\ddots &\vdots\\
    1+\ii & 1+\ii & \cdots &\cdots&0
 \end{pmatrix},
\end{equation*}
where $N=\sum_{n=1}^{d-1}G_{nn}$ is a diagonal matrix. The maximal and minimal eigenvalues of $N$ are given by $n_1=\sum_{n=1}^{d-1}1/{\sqrt{n(n+1)}}$ and $n_d=-\sqrt{(d-1)/{d}}$.
In order to bound the eigenvalues of $F$, using inequality~(\ref{Wineq}), we need to evaluate the eigenvalues of,
\begin{equation*}
V=\frac1{\sqrt{2}}\begin{pmatrix}
  0 & 1-\ii &1-\ii& \cdots &1-\ii \\
  1+\ii & 0 & \cdots & \cdots &1-\ii \\
  1+\ii & \vdots & \ddots &\ddots &\vdots \\
  \vdots & \vdots & \ddots &\ddots &\vdots\\
    1+\ii & 1+\ii & \cdots &\cdots&0 
 \end{pmatrix},
\end{equation*}
whose eigenvalues equation is
\begin{equation*}
{\rm Det}(\widetilde{V})\equiv
{\rm Det}\begin{pmatrix}
  -\sqrt{2}v & 1-\ii &1-\ii& \cdots &1-\ii \\
  1+\ii & -\sqrt{2}v & \cdots & \cdots &1-\ii \\
  1+\ii & \vdots & \ddots &\ddots &\vdots \\
  \vdots & \vdots & \ddots &\ddots &\vdots\\
    1+\ii & 1+\ii & \cdots &\cdots&-\sqrt{2}v 
 \end{pmatrix}=0.
\end{equation*}
To solve this equation we use the identity
\begin{equation}\label{detID}
{\rm Det}\begin{pmatrix}
  A & B  \\
  C & D 
\end{pmatrix}={\rm Det}(D){\rm Det}(A-BD^{-1}C),
\end{equation}
which holds for invertible $D$. By identifying
$A =-\sqrt{2}v$, $B=(1-\ii)(1,\ldots,1)=C^\dagger$ are vectors with $d-1$ components, and $D=\widetilde{V}_{d-1}$, where $\widetilde{V}_{d-1}$ is a $(d-1)\times(d-1)$ matrix with exactly the same structure as $\widetilde{V}$, we obtain the recursive relation, 
\begin{equation}\label{recursive}
{\rm Det}(\widetilde{V}_{k+1})={\rm Det}(\widetilde{V}_{k})(-\sqrt{2}v-B_{k}\widetilde{V}_{k}^{-1}C_{k}),\; k=2,\ldots,d-1,
\end{equation}
where $B_{k}$ is the same vector as $B$ but with $k$ elements. To solve this recursive relation we simply need to give the solution for the $k=1$,
\begin{equation*}
{\rm Det}(\widetilde{V}_{2})=-2v^2+2=0.
\end{equation*}
We therefore have an analytical solution (in a form of a recursive relation)  for the eigenvalues of $V$, hence, we can use them to bound the eigenvalues of $F$ according to Eq.~(\ref{Wineq}),
\begin{equation*}
-\sqrt{\frac{d-1}{d}}+v_d \leq
\begin{matrix}
\text{eigenvalues}\\
\text{of } F
 \end{matrix} \leq \sum_{n=1}^{d-1}\frac{1}{\sqrt{n(n+1)}} + v_1.
 \end{equation*}

Next we bound the eigenvalues of $R_{nm}=F-d(d+1)G_{nm}$. For $n=m$ with $n=1,\ldots,d-1$, 
\begin{equation}\label{pnn}
R_{nn}=\sum_{m=1}^{d-1}G_{mm}-\tilde{d}G_{nn}+V,
\end{equation}
where $\tilde{d}=d(d+1)$. The largest eigenvalue of the matrices $\sum_{m=1}^{d-1}G_{m,m}-\tilde{d}G_{n,n}$ is $\sum_{m=1}^{d-1}1/{\sqrt{m(m+1)}}+\tilde{d}\sqrt{(d-1)/{d}}$ while the smallest eigenvalue is $-\sqrt{(d-1)/{d}}-\tilde{d}\sum_{m=1}^{d-1}1/{\sqrt{m(m+1)}}$. Using inequality~(\ref{Wineq}), the eigenvalues of $R_{nn}$ for all $n=1,\ldots,d-1$ are bounded by
\begin{align*}
&\begin{matrix}
\text{eigenvalues}\\
\text{of } R_{nn}
 \end{matrix}\geq -\sqrt{\frac{d-1}{d}}-\tilde{d}\sum_{m=1}^{d-1}\frac{1}{\sqrt{m(m+1)}}+v_d,\nn
&\begin{matrix}
\text{eigenvalues}\\
\text{of } R_{nn}
 \end{matrix}\leq \sum_{m=1}^{d-1}\frac{1}{\sqrt{m(m+1)}}+\tilde{d}\sqrt{\frac{d-1}{d}}+v_1.
\end{align*}

To bound the eigenvalue of $R_{nm}$ with $n\neq m$ let us first look at $R_{12}$,
\begin{equation*}
R_{12}=\sum_{n=1}^{d-1}G_{nn}+\frac1{\sqrt{2}}\begin{pmatrix}
  0 & 1-\tilde{d}-\ii &1-\ii& \cdots &1-\ii \\
  1-\tilde{d}+\ii & 0 & \cdots & \cdots &1-\ii \\
  1+\ii & \vdots & \ddots &\ddots &\vdots \\
  \vdots & \vdots & \ddots &\ddots &\vdots\\
    1+\ii & 1+\ii & \cdots &\cdots&0
 \end{pmatrix}.
\end{equation*}
The eigenvalues equation of the non-diagonal matrix, 
\begin{equation*}
W=\frac1{\sqrt{2}}\begin{pmatrix}
  0 & 1-\tilde{d}-\ii &1-\ii& \cdots &1-\ii \\
  1-\tilde{d}+\ii & 0 & \cdots & \cdots &1-\ii \\
  1+\ii & \vdots & \ddots &\ddots &\vdots \\
  \vdots & \vdots & \ddots &\ddots &\vdots\\
    1+\ii & 1+\ii & \cdots &\cdots&0
 \end{pmatrix}
\end{equation*}
reads 
\begin{align}\label{detW}
&{\rm Det}(\widetilde{W})\nn&\equiv{\rm Det}\begin{pmatrix}
  -\sqrt{2}w & 1-\tilde{d}-\ii &1-\ii& \cdots &1-\ii \\
  1-\tilde{d}+\ii & -\sqrt{2}w & \cdots & \cdots &1-\ii \\
  1+\ii & \vdots & \ddots &\ddots &\vdots \\
  \vdots & \vdots & \ddots &\ddots &\vdots\\
    1+\ii & 1+\ii & \cdots &\cdots&-\sqrt{2}w
 \end{pmatrix}=0.
\end{align}
Except of the first row and column (i.e., the $(d-1)\times(d-1)$ inner matrix) of the above equation equals $\widetilde{V}_{d-1}$. Therefore to solve Eq.~(\ref{detW}), one can use the same recursive relation of Eq.~(\ref{recursive}) with the same definitions of $A,B,C$ and $D$ as before  up to the $d-1$th `level', and at the $d$th level make use of
\begin{equation*}
{\rm Det}(\widetilde{W})={\rm Det}(\widetilde{V}_{d-1})(-\sqrt{2}w-B_{d-1}\widetilde{V}_{d-1}^{-1}C_{d-1}),
\end{equation*}
with the $d-1$ components' vectors $B_{d-1}=(1-\tilde{d}-\ii,1-\ii,\cdots,1-\ii)=C_{d-1}^\dagger$. Thus, we find the maximum and minimum eigenvalues of $W$, $w_1$ and $w_d$, which enable us to bound the eigenvalues of $R_{12}$,
\begin{equation}\label{boundQ12}
-\sqrt{\frac{d-1}{d}}+w_d \leq
\begin{matrix}
\text{eigenvalues}\\
 \text{of }R_{12} 
 \end{matrix} \leq \sum_{n=1}^{d-1}\frac{1}{\sqrt{n(n+1)}} + w_1.
 \end{equation}
By writing $R_{nm}$ with $n\neq m$ as
\begin{equation*}
  R_{nm}=\sum_{k=1}^{d-1}G_{kk}+V-\tilde{d}G_{nm},
\end{equation*}
one can show that the eigenvalues of $W_{nm}=V-\tilde{d}G_{nm}$ are equal to the eigenvalues of $W$, and therefore the eigenvalues of  $R_{nm}$ are all bounded by the bounds appearing in Eq.~(\ref{boundQ12}). 

Upon defining 
\begin{align*}
\lambda_{\rm max}&={\max}\{(d+1)\Bigl( \sum_{n=1}^{d-1}\frac{1}{\sqrt{n(n+1)}} + v_1 \Bigr),\nn
&\;\;\;\;\;\; \sum_{m=1}^{d-1}\frac{1}{\sqrt{m(m+1)}}+\tilde{d}\sqrt{\frac{d-1}{d}}+v_1,\nn
&\;\;\;\;\;\; \sum_{n=1}^{d-1}\frac{1}{\sqrt{n(n+1)}} + w_1 \}\nn
\lambda_{\rm min}&={\min}\{(d+1)\Bigl( -\sqrt{\frac{d-1}{d}}+v_d \Bigr), -\sqrt{\frac{d-1}{d}}+w_d,\nn
&\;\;\;\;\;\; -\sqrt{\frac{d-1}{d}}-\tilde{d}\sum_{m=1}^{d-1}\frac{1}{\sqrt{m(m+1)}}+v_d,\},
\end{align*}
$t_0=-\frac1{d^2}\frac1{\lambda_{\rm max}}$, $t_1=-\frac1{d^2}\frac1{\lambda_{\rm min}}$, we obtain $t_m=\max\{|t_0|,t_1\}$. The optimal SIC POVM associated with the Gell-Mann basis is then given by
\begin{align*}
& P_{nm}\equiv\frac{1}{d^2}I+t_m\left(\sum_{k=1}^{d-1}G_{kk}- d(d-1) G_{nm}\right)\nonumber\\
& P_{dd}\equiv\frac{1}{d^2}I-t_m(d-1)\sum_{k=1}^{d-1}G_{kk} \;.\label{SICgm}
\end{align*}
We have numerically calculated $t_m$ as a function of the dimension and plotted in Fig.~(\ref{fig:tm}) the ratio of $t_m$ to its value had a rank-1 SIC POVM exists in this dimension $1/(d(d+1))^{3/2}$. We found that the as the dimension grows $t_m(d(d+1))^{3/2}$ quickly drops to zero, that is the Gell-Mann basis is far from an optimal basis. 
\begin{figure}[t]
\centering
\includegraphics[width=.8\linewidth]{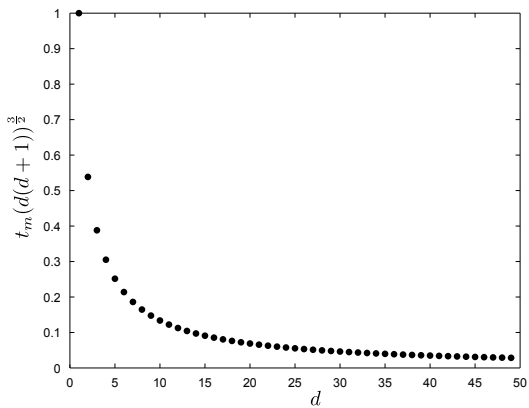}
\caption{A plot of $t_m(d(d+1))^{3/2}$ as a function of the dimension $d$. The $t_m$ is calculated for the Gell-Mann basis, and the quantity $0\leq t_m(d(d+1))^{3/2}\leq 1$ indicates how close the SIC POVM to a rank-1 POVM (upper bound). As the dimension grows $t_m(d(d+1))^{3/2}$ quickly drops to zero, that is the Gell-Mann basis results in a SIC POVM whose elements are close to the identity operator.}
\label{fig:tm}
\end{figure}%

\section{Rank 1 SIC POVMs}\label{sec:rank1}
From the expression for $a$, Eq.~(\ref{at}), we see that if $a=1/d^2$ then
\begin{equation}\label{t}
t=\frac{1}{[d(d+1)]^{3/2}}\;,
\end{equation}
where we assumed without loss of generality that $t$ is positive since we can always replace the orthonormal basis $\{F_\alpha\}$ with the orthonormal basis $\{-F_{\alpha}\}$. Substituting this value of $t$ in Eq.~(\ref{f}) gives:
\begin{equation}\label{rank1f}
F_\alpha=\frac{1}{\sqrt{d(d+1)}}\left(I+\Pi_{d^2}-(d+1)\Pi_\alpha\right)\;,
\end{equation}
for rank 1 SIC POVMs where $\Pi_{\alpha}\equiv dP_{\alpha}$ and $\Pi_{d^2}\equiv dP_{d^2}$ are rank 1 projections.

If $\{P_\alpha\}$ is a rank 1 SIC POVM then the eigenvalues of $F_\alpha$ in~(\ref{rank1f}) are not depending on $\alpha$
since $\tr(\Pi_{\alpha}P_{d^2})=1/(d+1)$ for all $\alpha=1,2,...,d^2-1$. A straightforward calculation shows that the eigenvalues
of the rank 2 matrix $\Pi_{d^2}-(d+1)\Pi_\alpha$ are given by $\left(-d\pm\sqrt{d^2+4d}\right)/2$. Thus, for rank 1 SIC POVMs the eigenvalues of all $F_\alpha$ are given by:
\begin{align}
& \gamma_1=\frac{1}{2\sqrt{d(d+1)}}\left(2-d+\sqrt{d^2+4d}\right)\nonumber\\
& \gamma_2=\frac{1}{2\sqrt{d(d+1)}}\left(2-d-\sqrt{d^2+4d}\right)\nonumber\\
& \gamma_k=\frac{1}{\sqrt{d(d+1)}}\;\;\;\forall\;k=3,...,d\;.\label{lambda}
\end{align}
This observation indicates that while the class of general SIC POVMs in $\rH_d$ is relative big (any orthonormal basis $\{F_\alpha\}$ yields a general SIC POVM), the class of rank 1 SIC POVMs in $\rH_d$ (if exists) is extremely small. The fact that all
$\{F_\alpha\}$ have the same eigenvalues implies that there exist $d^2-1$ unitary matrices $\{U_\alpha\}$ such that
$F_\alpha=U_\alpha D U_{\alpha}^{\dag}$, where $D={\rm diag}\{\gamma_1,...,\gamma_d\}$ is a diagonal matrix with the $\gamma_k$s as in Eq.~(\ref{lambda}). The orthogonality relation reads as 
$\tr(U_\alpha D U_{\alpha}^{\dag}U_\beta D U_{\beta}^{\dag})=\delta_{\alpha,\beta}$. Thus, if the set $\{U_\alpha\}$ forms a group, 
the orthogonality relations can be written simply as $\tr(U_\alpha D U_{\alpha}^{\dag}D)=0$ for $U_\alpha\neq I$.
This is somewhat reminiscent to the more standard construction of rank 1 SIC POVMs with the Weyl-Heisenberg group~\cite{Rene04,Sco09}.

\section{Example in $d=2$}\label{sec:d2}

The simplest example to construct is in dimension $d=2$. In this case, we take the three normalized Pauli matrices 
$$
F_1\equiv\frac{1}{\sqrt{2}}\left(\begin{array}{cc}0&1\\1&0\end{array}\right),
F_2\equiv\frac{1}{\sqrt{2}}\left(\begin{array}{cc}0&i\\-i&0\end{array}\right),
F_3\equiv\frac{1}{\sqrt{2}}\left(\begin{array}{cc}1&0\\0&-1\end{array}\right)
$$
to be our fixed orthonormal basis for the vector space of $2\times 2$ traceless Hermitian matrices $\rT_2$.
With this choice we get
$$
F=\frac{1}{\sqrt{2}}\left(\begin{array}{cc}1&1+i\\1-i&-1\end{array}\right)\;.
$$
The eigenvalues of the three matrices $F-d(d+1)F_\alpha=F-6F_\alpha$ ($\alpha=1,2,3$) are all the same
and are given by $\lambda_{\alpha}=3\sqrt{3/2}$ and $\mu_\alpha=-3\sqrt{3/2}$. Also the eigenvalues of
$3F$ are the same and given by $\lambda_4=3\sqrt{3/2}$ and $\mu_4=-3\sqrt{3/2}$. Thus, $t_0=-\frac{1}{12}\sqrt{\frac{2}{3}}$,
 $t_1=\frac{1}{12}\sqrt{\frac{2}{3}}$, and for any non-zero $t\in[t_0,t_1]$ the four matrices
 \begin{align*}
 & P_{\alpha}\equiv\frac{1}{4}I+t\left(F- 6 F_\alpha\right)\;\;\text{for}\;\alpha=1,2,3\\
& P_{4}\equiv\frac{1}{4}I+3tF
 \end{align*}
 form a general SIC POVM. Note that $t_1$ equals to the maximum possible value given in Eq.~(\ref{t}).
 Thus, for $t=t_1$ we get the following rank 1 SIC POVM:
 \begin{align*}
 & P_{1}=\frac{1}{12\sqrt{3}}\left(\begin{array}{cc}3\sqrt{3}+1&-5+i\\-5-i&3\sqrt{3}-1\end{array}\right)\\
&P_{2}=\frac{1}{12\sqrt{3}}\left(\begin{array}{cc}3\sqrt{3}+1&1-5i\\1+5i&3\sqrt{3}-1\end{array}\right)\\
&P_{3}=\frac{1}{12\sqrt{3}}\left(\begin{array}{cc}3\sqrt{3}-5&1+i\\1-i&3\sqrt{3}+5\end{array}\right)\\
&P_{4}=\frac{1}{4\sqrt{3}}\left(\begin{array}{cc}\sqrt{3}+1&1+i\\1-i&\sqrt{3}-1\end{array}\right)\\
 \end{align*}
Note that the above rank 1 SIC POVM is equivalent (up to a global rotation) to the original one introduced first in~\cite{Caves,Zau99}. Our example though shows that the rank 1 SIC POVM can be obtained from the three Pauli matrices. Thus, the three normalized Pauli matrices form an optimal basis for $\rT_{d=2}$. The generalized Gell-Mann matrices, which reduce to the normalized Pauli matrices for $d=2$, are not the optimal basis for $d>2$, thus do not correspond rank 1 SIC POVM for higher dimensions, as indicated in Fig.~\ref{fig:tm}.

\section{Discussion}\label{sec:conc}
We have constructed the complete set of general SIC POVMs given in Eq.~(\ref{main}). Our construction shows that the family of general SIC POVMs is as big as the set of all orthonormal bases in $\rT_d$. Fixing an orthonormal basis $\{F_\alpha\}$, every other orthonormal basis $\{F'_\alpha\}$ of $\rT_d$ can be obtained from $\{F_\alpha\}$ via $F'_\alpha=\sum_{\beta}A_{\alpha\beta}F_\beta$, where $A$ is a real orthogonal matrix. Thus, every element in the group $O(d^2-1)$ (i.e. the set of $(d^2-1)\times (d^2-1)$ real orthogonal matrices) defines a one parameter set of general SIC POVMs (see Eq.(\ref{main})). Typically, distinct orthogonal matrices corresponds to distinct SIC POVMs.

The SIC POVMs defined in Eq.(\ref{main}) can be made arbitrarily close to 
$\frac{1}{d^2}I$ by taking $t$ to be close enough to zero. These weak SIC POVMs do not disturb much the state of the system and therefore can be used for tomography that is followed by other quantum information processing tasks.  Note that from Eq.~(\ref{main}) \emph{any} orthonormal basis of $\rT_d$ can be used to construct a weak SIC POVM. This implies that the set of weak SIC POVMs is extremely big whereas the set of rank 1 SIC POVMs (if exists) is extremely small.

The purity parameter $a$ of a general SIC POVM determines how close it is to a rank 1 SIC POVM. If $a$ is very close to $1/d^3$ than the general SIC POVMs is \emph{full} rank, whereas if it is very close to $1/d^2$ it is close to being rank 1. From our construction it is obvious that if there exists a SIC POVM with some $a=a_0>1/d^3$, then there exist SIC POVMs with \emph{any} $a$ in the range $(1/d^3,a_0]$. Thus, it is natural to look for the highest possible value that $a$ can take which we denoted in Eq.~(\ref{amax}) by $a_{\max}$. From both the analytical and numerical evidence we know that $a_{\max}=1/d^2$ for low dimensions since in small dimensions there exists rank 1 SIC POVMs~\cite{Sco09}. However, since in higher dimensions it seems to be a hard task to construct rank 1 SIC POVMs, one can try to find a lower bound for $a_{\max}$. In Sec.~\ref{sec:gm} considered such bound by constructing SIC POVMs through the generalized Gell-Mann operator basis. As one may expect, we found that the lower bound quickly approaches its lower value $1/d^3$ as the dimension increases. However, one can numerically search for bases which give interesting bounds by making orthogonal transformations on the  generalized Gell-Mann basis, and to look for those transformations for which $a$ increases. We leave this analysis for future work. 

\emph{Acknowledgments:---}
We would like to thank the anonymous referees that suggested to improve the derivation of Theorem~\ref{thm:main} and to consider the particular case of SIC POVMs constructed from the generalized Gell-Mann basis.  G.G. thanks Rob Spekkens and Nolan Wallach for many fruitful discussions that initiated this project, to Markus Grassl for helpful correspondence and comments on the first draft of this paper, and to Christopher Fuchs for pointing out reference~\cite{App07}. G.G. research was partially supported by NSERC and by the Air Force Office of Scientific Research as part of the Transformational Computing in Aerospace Science and Engineering Initiative under grant FA9550-12-1-0046. A.K. research is supported by NSF Grant PHY-1212445. 

\appendix
\section{Proof of Lemma~\ref{le:symOp}}\label{app:proof_le:symOp}
\begin{proof}
To prove that the operators $P_\alpha$, $\alpha=1,2,...,d^2$ of Eq.~(\ref{symOp}) are symmetric according to Definition~\ref{def:sic}, we first note that $\tr(P_\alpha)=1/d$ for all $\alpha=1,2,...,d^2$.
For $\alpha,\beta\in\{1,2,...,d^2-1\}$ we have
\begin{align*}
\tr(P_\alpha P_\beta)&=\frac{1}{d^3}+t^2\tr(R_\alpha R_\beta)\\
&=\frac{1}{d^3}+t^2r\delta_{\alpha,\beta}-\frac{t^2r}{d^2-1}(1-\delta_{\alpha,\beta}).
\end{align*}
We also have for $\alpha\in\{1,2,...,d^2-1\}$
\begin{align*}
\tr(P_\alpha P_{d^2})&=\frac{1}{d^3}-t^2\left(r-\frac{r}{d^2-1}(d^2-2)\right)\\
&=\frac{1}{d^3}-\frac{t^2r}{d^2-1}\;,
\end{align*}
and
\begin{align*}
\tr(P_{d^2} P_{d^2})&=\frac{1}{d^3}+t^2\left((d^2-1)r-(d^2-1)(d^2-2)\frac{r}{d^2-1}\right)\\
&=\frac{1}{d^3}+t^2r\;.
\end{align*}
Thus, all that is left to show is that $\{P_\alpha\}_{\alpha=1,2,...,d^2}$ is linearly independent.
Indeed, suppose
$\sum_{\alpha=1}^{d^2}s_\alpha P_\alpha=0$. Note first that by taking the trace on both sides we get $\sum_{\alpha=1}^{d^2}s_\alpha=0$. Thus,
$$
0=\sum_{\alpha=1}^{d^2}s_\alpha P_\alpha=t\sum_{\alpha=1}^{d^2-1}\left(s_\alpha-s_{d^2}\right)R_\alpha\;.
$$
Recall that $t\neq 0$ and $R_\alpha$ are $d^2-1$ linearly independent matrices in $\rH_d$. Thus, $s_\alpha=s_{d^2}$,
and since $\sum_{\alpha=1}^{d^2}s_\alpha=0$ we get $s_\alpha=0$ for all $\alpha=1,2,...,d^2$.
\end{proof}

\comment{%
We now construct a basis $\{R_\alpha\}_{\alpha=1,2,...,d^2-1}$ of ${\cal W}_d$ that satisfies the conditions in Lemma~\ref{le:symOp}.
Lets $\{F_\alpha\}_{\alpha=1,2,...,d^2-1}$ be an orthonormal basis of ${\cal W}_d$. That is,
$$
\tr(F_\alpha F_\beta)=\delta_{\alpha,\beta}\;.
$$
An example of such a basis is the generalized Gell-Mann matrices in any dimension.
Now, let 
$$
R_\alpha\equiv xF_\alpha+y\sum_{\beta\neq\alpha} F_\beta
$$
where $x$ and $y$ are some non-zero real constants.
We then get
\begin{equation}\label{tec}
\tr(R_\alpha R_\beta)=x^2\delta_{\alpha,\beta}+2xy(1-\delta_{\alpha,\beta})+y^2\left((d^2-3)+\delta_{\alpha,\beta}\right)
\end{equation}
We would like to find $x$ and $y$ such that $\tr(R_\alpha R_\beta)$ above has the form Eq.~(\ref{ggg}).
We therefore looking for $x$ and $y$ such that for any $\alpha\neq\beta\in\{1,2,...,d^2-1\}$
\begin{equation}\label{o1}
\frac{\tr(R_{\alpha}^{2})}{\tr(R_\alpha R_\beta)}=-(d^2-1)\;.
\end{equation}
From Eq.~(\ref{tec}) we get
\begin{equation}\label{o2}
\frac{\tr(R_{\alpha}^{2})}{\tr(R_\alpha R_\beta)}=\frac{x^2+y^2(d^2-2)}{2xy+y^2(d^2-3)}=\frac{\omega^2+d^2-2}{2\omega+d^2-3}\;,
\end{equation}
where $\omega\equiv x/y$. Comparing Eq.~(\ref{o1}) and Eq.~(\ref{o2}) gives two solutions for $\omega$:
$$
\omega=1-d^2\pm d.
$$
Without loss of generality, we take $y=1$ so that $x=\omega$. Thus, for any real $t\neq 0$, we define:
\begin{align*}
& P_{\alpha}\equiv\frac{1}{d^2}I+t \left(\omega F_\alpha+\sum_{\beta\neq\alpha} F_\beta\right)\;\;\;\;\;{\rm for}\;\;\alpha=1,2,...,d^2-1\\
& P_{d^2}\equiv\frac{1}{d^2}I-t\sum_{\alpha=1}^{d^2-1}\left(\omega F_\alpha+\sum_{\beta\neq\alpha} F_\beta\right) \;.
\end{align*}
Denoting by
$$
F=\sum_{\alpha=1}^{d^2-1} F_\alpha\;,
$$
we get
\begin{align*}
& P_{\alpha}\equiv\frac{1}{d^2}I+t\Bigl(F+ (\omega-1) F_\alpha\Bigr)\;\;\;\;\;{\rm for}\;\;\alpha=1,2,...,d^2-1\\
& P_{n^2}\equiv\frac{1}{d^2}I-t(\omega+d^2-2)F \;.
\end{align*}
Taking $\omega=1+d-d^2$ gives
\begin{align*}
& P_{\alpha}\equiv\frac{1}{d^2}I +t\Bigl( F-d(d-1) F_\alpha\Bigr)\;\;\;\;\;{\rm for}\;\;\alpha=1,2,...,d^2-1\\
& P_{d^2}\equiv\frac{1}{d^2}I-t(d-1)F \;.
\end{align*} 
While the second solution, $\omega=1-d-d^2$, gives
\begin{align*}
& P_{\alpha}\equiv\frac{1}{d^2}I+t\Bigl(F- d(d+1) F_\alpha\Bigr)\;\;\;\;\;{\rm for}\;\;\alpha=1,2,...,d^2-1\\
& P_{d^2}\equiv\frac{1}{d^2}I+t(d+1)F \;.
\end{align*} 
}%

\end{document}